\newtheorem{theorem}{Theorem}
\newtheorem{definition}{Definition}
\begin{document}
%
\title{Privacy-preserving Network Functionality Outsourcing}

\author{
\IEEEauthorblockN{\emph{\today}\\
 Junjie Shi, Yuan Zhang and Sheng Zhong}

\IEEEauthorblockA{
National State Key Laboratory for Novel Software Technology\\
Nanjing University, Nanjing 210023, China\\
junjie.shi@smail.nju.edu.cn, zhangyuan@nju.edu.cn and zhongsheng@nju.edu.cn}
}


%

\maketitle


\begin{abstract}
Since the advent of software defined networks (\mbox{SDN}), there have been many attempts  to outsource the complex and costly local network functionality, i.e. the middlebox, to the cloud in the same way as outsourcing computation and storage. The privacy issues, however, may thwart the enterprises' willingness to adopt this innovation since the underlying configurations of these middleboxes may leak crucial and confidential information which can be utilized by attackers. To address this new problem, we use firewall as an sample functionality and propose the first  privacy preserving outsourcing framework and  schemes in SDN. The basic technique that we exploit is a ground-breaking tool in cryptography, the \textit{cryptographic multilinear map}. In contrast to the infeasibility in efficiency if a naive approach is adopted, we devise practical schemes that can outsource the middlebox as a blackbox after \textit{obfuscating} it such that the cloud provider can efficiently perform the same functionality without knowing its underlying  private configurations. Both theoretical analysis and experiments on real-world firewall rules demonstrate that our schemes are secure, accurate, and practical.
\end{abstract}




%
\IEEEpeerreviewmaketitle

\section{Introduction}

Although network functionalities play an important role in enterprise network to make it robust, fast, and secure, they often burden a company with great hardware financial pressure and management complexity. Network middleboxes, such as firewalls  and intrusion detection systems (IDS), are the customized appliances to implement these  sophisticated functionalities, and they are also often hard to deploy and upgrade. A recent survey \cite{sherry2012making} reveals that the investments in network infrastructures deployment of these middleboxes as well as in the personnel cost of managing and maintaining them are substantial.
%


The emergence of SDN helps us separate the logic control from the basic traffic processing, so we can  relieve some of the above ``pain points'' to a certain extent by taking the advantage of SDN\cite{casado2007ethane,gember2012toward,qazi2013simple,DPIaS14}.
But though the management of networks can be simplified by SDN, it still has many shortcomings. For instance, because these functionalities are still implemented within the enterprise network, the hardware cost and the everyday maintenance cost are still there.

Thus, to further reduce the cost and complexity faced by local networks, some SDN researchers have attempted to outsource these network functionalities or middleboxes out to cloud providers, in the same way that the computation and storage services have been successfully outsourced \cite{sherry2012making, gibb2012outsourcing}. After migrating  middleboxes to cloud, the local enterprises will receive better services with less expenditure and management complexity since the cloud provider can deploy more advanced hardware and hire professional experts to offer better network functionality services by taking advantage of economies of scale.  Moreover, the local enterprise can also have a clearer and more elastic network infrastructure.

However, the outsourcing of local SDN middleboxes  may bring in serious privacy issues\cite{sherry2012making}. Because at present, an enterprise has to provide the detailed underlying configurations of these middleboxes, which may embody very sensitive and important enterprise secrets, when outsourcing middleboxes. For example, the configurations of firewalls or IDS may reveal what the enterprise network topology looks like and how different sectors within an enterprise are regulated. These policies or configurations are set and  known by a restricted few administrators even before outsourcing. Were these confidential pieces of information hidden in these polices to be exposed to adversaries, they may be utilized to impose a great threat to the enterprise\cite{SRDS_Liu2012}.

In brief, the new challenges in this scenario now we face are: 1) we need to  protect the private information hidden in the configurations of these  outsourced middleboxes; 2) we need to allow the cloud provider to perform the network functionalities without mistakes ( or with acceptable errors). Note that in line with existing outsourcing architecture in SDN\cite{sherry2012making,karaoglu2013offloading,gibb2012outsourcing}, we  assume that the cloud provider for an enterprise is its Internet Service Providers (ISP), so we will not consider the protection of the packet flows of the enterprise here.

In this paper, we try to address these challenges by designing a privacy preserving scheme for network middlebox outsourcing in SDN, exploiting a \textit{cryptographic multilinear map}, a powerful cryptographic tool that was recently invented \cite{zvika2013obfus,coron2013practical,garg2013candidate}. To keep things concrete, in this paper we will take firewalls, a typical and indispensable  network middlebox, as a case study.

The main idea of our scheme is to use cryptographic multilinear map to obfuscate the rules in a firewall efficiently such that the original configurations are protected while their functionality is preserved. The theoretical foundation of our idea is cryptographic \textit{program obfuscation} \cite{barak2012possibility,programobf}.
In our work, we  first present framework SOFA (\underline{S}ecure framework for \underline{O}utsourcing \underline{F}irew\underline{A}ll). This framework consists of two phases. First, the controller in the enterprise SDN constructs a cryptographic obfuscator for the firewall based on its rules. This obfuscator encodes  the firewall rules such that the underlying confidential information cannot be derived. Then this enterprise transmits this obfuscated firewall to a  provider.
In the second phase, the cloud provider  performs the firewall functionality by filtering incoming and outgoing network traffics with this obfuscated firewall without knowing the detailed configurations.
Based on this overall framework, a basic scheme and two enhanced schemes are proposed to address the detailed technical issues.

To summarize, we have made three main contributions in our work:
\begin{itemize}
\item To the best of our knowledge, we are the first to deal with the privacy problem about  network functionality outsourcing (NFO) in SDN. The obfuscation approach we propose has a solid theoretical foundation. And framework SOFA is compatible with existing SDN outsourcing architecture, such as APLOMB\cite{sherry2012making}.
\item To outsource firewalls as a concrete example, we propose a basic scheme which is efficient in the first phase, and two enhanced schemes which can further improve the efficiency in the second phase.
We also analyse the complexity and security  of our algorithms to show that they are efficient, secure, and reliable.
\item We implement a prototype of our schemes, and evaluate its performance on real-world firewall rules, the  results show that our proposed schemes are practical.
\end{itemize}

Our paper are organized as follows. First, we review some related work in section \ref{sec2}. In section \ref{sec3}, we give formal definition to the privacy issues in network functionality outsourcing and introduce some basic cryptographic backgrounds. In section \ref{sec4}, we present framework SOFA and a basic scheme. Some approaches which can boost its efficiency are discussed in \ref{Improve}. The security of these algorithms is analysed in section \ref{sec5}. Evaluation of SOFA with firewall rules is presented in section \ref{sec6}.  Finally, in section \ref{sec7} we conclude our work and discuss some potential future research directions.

\section{Related Works}
\label{sec2}

In this section, we review some recent works on outsourcing frameworks in SDN, some privacy-preserving outsourcing techniques in cloud computing or traditional enterprise networks.
We can see that no existing work can effectively address the novel challenges we have proposed.

\subsection{Outsourcing network functionality in SDN}
There has already been much research done about the outsourcing of network middleboxes services exploiting the advantage of SDN\cite{gember2013stratos,sherry2012making,karaoglu2013offloading,gibb2012outsourcing,kotronis2012outsourcing}.
Sherry et al.\cite{sherry2012making} design and implement a prototype of APLOMB architecture, by which the middlebox processing within enterprises can be outsourced to clouds. This architecture is mainly intended to deal with the problem of how to redirect the traffic to/from the enterprise SDN to the cloud without greatly damaging performance. It is compatible with many specific middleboxes, such as firewalls.
Gibb et al.\cite{gibb2012outsourcing} also envision outsourcing these network functionalities to external \textit{Feature Providers}. The main goal of their work is to provide anyone with chances to be a network service provider without any limitation on location.
\textit{Cloud-Assisted Routing}\cite{karaoglu2013offloading} is another scheme aimed at offloading the local computation-intensive and memory-intensive operations in the routing functionality to the cloud. 

There is only a little current research concerning the privacy and security problem of outsourcing network functionality in SDN. Fayazbakhsh et al.\cite{fayazbakhsh2013verifiable}  mainly discuss the roadmap to construct a verifiable NFO architecture that can verify the correctness on intended functionality, performance assurance and can audit the actual workload in the cloud.

However, none of the these works about outsourcing in SDN can address the privacy challenges we have stated.

\subsection{Privacy Preserving outsourcing in Cloud Computing}

In literature, many schemes are designed to tackle the privacy problem of  outsourcing  data storage and computation  rather than the network middlebox\cite{wang2010privacy,wei2014security}. However, because the network functionality is often performed on real-time packet traffics in the cloud, its requirement is different. For example, the well-known public auditing framework \cite{wang2010privacy} introduces a third party auditor to help cloud users check the integrity of their outsourced data. However, in this framework the user files are sent to the cloud directly without encryption, so it will not be applicable to network middlebox outsourcing. So these existing approaches for cloud computing or storage are incapable of addressing this new problem.

Khakpour and Liu\cite{SRDS_Liu2012} present Landon, a framework to address the problem of how to protect the firewall policies when  it is outsourced to a cloud provider in traditional enterprise networks. The main idea of this framework is to  convert the raw access control lists to an equivalent Firewall Decision Diagram (FDD) and then to anonymize the FDD with Bloom Filters. This framework works well after some complex optimization methods are used. The limitation of this framework is that it is based on a specific presentation (FDD) of firewall and the Bloom Filter is not very secure and has some drawbacks\cite{rottenstreich2012bloom}.

\subsection{Related Cryptographical Research}
In cryptography, there is also some theoretical work dealing with similar problems, such as methods based on Yao's \emph{garbled circuits} for secure multi-party computation\cite{yao1986generate}. In particular, recently Brakerski and Rothblum\cite{zvika2013obfus} propose an elegant construction for obfuscating conjunctions based on an asymmetric multilinear map. This is the fundamental cryptographic tool we will use in this paper.
But their work focuses on the theoretical construction and security proof of obfuscating a single conjunction. So their result, though significant in theory, will be impractical in practice in directly obfuscating  numerous firewall rules. We will discuss this problem at length in section \ref{sec4}.

\section{Problem formulation and Preliminaries}
\label{sec3}
In this section, we formulate the privacy problems of NFO in SDN. To this end, we first review the overall architecture for outsourcing network functionality, and give its  threat model. Specifically, we  give a definition for privacy preserving outsourcing firewalls.
In addition, we also recall some important cryptographic backgrounds about multilinear map.

\subsection{System Model}
The overall architecture for outsourcing network functionality to cloud has been proposed, such as APLOMB\cite{sherry2012making} and Jingling\cite{gibb2012outsourcing}. In such architecture, enterprise administrators specify the policies of middleboxes with its local SDN controller in a simple way. Then by the communication mechanism between the enterprise clients and the cloud provider, these policies are transmitted to the cloud controller and service requirements are also negotiated. The controller in cloud SDN then translate these policies into low-level network configurations for middleboxes, and then manage the cloud network resources to carry out the tasks. The traffic redirection of the enterprise network flows is also managed by this architecture.
We will focus on the privacy protection of middlebox configurations in the above related procedures.

\subsection{Threat Model}
We assume that the cloud middlebox provider is \textit{semi-honest} since in the  outsourcing system model, the cloud providers are commonly the ISPs for enterprises. Because the provider performs  network tasks on packet traffics of the enterprise, it will be able to observe the input packets and the output actions of them. So the \textit{semi-honest} provider may be curious to learn information from these input-output histories. But the provider should not actively generate tremendous packets to attack the outsourced middlebox $\mathcal{MB'}$ for such an attack will affect the service quality it provides. So the two main threats we consider are the analysis of the outsourced $\mathcal{MB'}$ itself, and the analysis of its running input-output history.

\begin{table}[!t]
\renewcommand{\arraystretch}{1.3}
  \centering
  \caption{ Firewall Rule Examples  }\label{table1}
  \begin{tabular}{@{} c |c |c |c |c |c |c @{}}
  \hline
  \multirow{2}{0.5cm}{rule \#}  & \multicolumn{2}{c|}{Source}  & \multicolumn{2}{c|}{Destination} & \multirow{2}{0.8cm}{Protocol} & \multirow{2}{0.8cm}{Action} \\
    \cline{2-3} \cline{4-5}
    & IP & Port & IP & Port & & \\
   \hline
  1 &  192.168.45.* & * & * & * & * & deny \\
   \hline
  2 & 10.*.*.* & * &192.168.4.* &80 &TCP &permit \\
   \hline
  3 & 10.56.*.* &* &192.168.*.* &[22,88] &TCP &permit  \\
    \hline
  4 & 114.212.190.* &8000 &* &8090 &UDP &deny  \\
   \hline
\end{tabular}
\end{table}
\subsection{Problem Definition}
Different middleboxes have their own characteristic in detailed configurations, thus in the following we focus on the firewall as an example.
The configuration of a typical IP firewall \textit{f} is an ordered sequence of rules, and they are usually presented in the format of access control lists (ACLs). Some examples are shown in Table \ref{table1}. Technically, the rules can be classified as two types: standard ACLs and extended ACLs. The standard rules only allow you to deny or permit packets by specifying the source IP addresses, like the first one in Table \ref{table1}. Extended rules also allow you to specify the protocol type, the destination address and the port ranges, like the other examples in table \ref{table1}. Once the configuration are settled, the firewall can perform packet filtering based on these rules to protect an enterprise network.

The rules of the firewall are what we need to protect since they may reveal, for example, much information of the enterprise's inner network topology and its security vulnerabilities. Specifically, we think that the IP address is more important than TCP port since the IP address determines the host end while the TCP ports are for specific processes in a host.  Formally, we give the following definition:

\begin{definition}(\textbf{Privacy Preserving Firewall Problem})
\label{PPFP}
The Privacy preserving of a firewall \textit{f} is the construction of a semantically equivalent secure firewall \textit{f\'} such that (1) for every packet \textit{p}, the probability of the decisions on it by \textit{f\'} and \textit{f} are different is negligible,  (2) it should be computationally efficient to perform the filtering tasks with  \textit{f\'}, and (3) there should be no computationally efficient ways to get the whole original plain rules of  \textit{f} based on  \textit{f\'} itself.
\end{definition}

It should be easy for us to give  privacy preserving definitions for other network middleboxes similarly.

\subsection{Preliminaries}
Here, we first review the definition of  \textit{ cryptographic multilinear map}, which is the cornerstone of our algorithms.

\begin{definition}(\textbf{$\kappa$-multilinear Map}\cite{garg2013candidate,boneh2003applications})
For $\kappa+1$ cyclic groups $G_1,\dots, G_\kappa, G_T$ of the same order $p$, a \textit{$\kappa$-multilinear map} $e : G_1\times \dots \times G_\kappa \rightarrow G_T$ has the following properties:

(1) For elements $g_1 \in G_1,\dots, g_\kappa \in G_\kappa$, index $i \in \left[ \kappa \right]$ and an integer $\alpha \in \mathbb{Z}_p $, we have:
 \[e(g_1,\dots,\alpha \cdot g_i,\dots,g_\kappa)= \alpha\cdot e(g_1,\dots,g_\kappa)\]

(2) The map $e$ is non-degenerate, which means that if $g_i \in G_i\; (i=1,\dots,\kappa) $  is a generator of $G_i $, then $e(g_1,\dots,g_\kappa)$ is a generator of the target group $G_T$.

In the above definition, if $G_i \; (i=1,\dots,\kappa)$ are all identical groups, it is called a symmetric multilinear map.
\end{definition}

A cryptographic multilinear map has been a long sought after and powerful tool. Not until recently was it approximately constructed in the form of  Graded Encoding System \cite{garg2013candidate,coron2013practical}. In particular, we will design our algorithms based on the construction over integers, which is a more practical one devised by Coron, Lepoint, and Tibouchi\cite{coron2013practical}. Now we briefly recall the definition of Graded Encoding System (GES) and its associated efficient procedures for manipulating encodings on which the description of our algorithms is based directly.
Note that we only consider the symmetric case here.

\begin{definition}(\textbf{$\kappa-$Graded Encoding System}\cite{garg2013candidate})
\label{Def2} A $\kappa$-Graded Encoding System consists of a ring $R$ and a system of sets $\mathcal{S} = \lbrace{S^{(\alpha)}_v \in {\lbrace 0,1\rbrace}^* : v \in \mathbb{N}, \alpha \in R \rbrace}$, with the following properties:

1. For every $v \in \mathbb{N}$, the sets $\lbrace S^{(\alpha)}_v : \alpha \in R \rbrace$ are disjoint.

2. There is an associate binary operation `+' and a self-inverse unary operation `$-$' (on ${\lbrace 0,1 \rbrace}^*$) such that for every $\alpha_1, \alpha_2 \in R$, every index $i \leq \kappa$, and every $u_1 \in S^{(\alpha_1)}_i$ and $u_2 \in S^{(\alpha_2)}_i$, it holds that
\[ u_1 + u_2 \in S^{(\alpha_1 + \alpha_2)}_i \; , \; -u_1 \in S^{(-\alpha1)}_i\]
where $\alpha_1 + \alpha_2$ and $-\alpha_1$ are addition and negation in $R$.

3. There is an associate binary operation `$\times$' (on ${\lbrace 0,1 \rbrace}^*$) such that for every $\alpha_1, \alpha_2 \in R$, every $i_1, i_2$ with $i_1 + i_2 \leq \kappa$, and every $u_1 \in S^{(\alpha_1)}_{i_1}$ and $u_2 \in S^{(\alpha_2)}_{i_2}$, it holds that $u_1 \times u_2 \in S^{(\alpha_1 \cdot \alpha_2)}_{i_1 + i_2}$. Here $\alpha_1 \cdot \alpha_2$ is multiplication in $R$, and $i_1 + i_2$ is integer addition.
\end{definition}

\begin{definition}(\textbf{Efficient Procedures for $\kappa$-Graded Encoding System}\cite{garg2013candidate,coron2013practical})
\label{Def3} For graded encoding system, we have the following efficient procedures:

\textbf{Instance Generation}:$(params,P_{zt})\leftarrow\mathnormal{InstGen}(1^\lambda, 1^\kappa)$, where $params$ is a description of a $\kappa$-Graded Encoding System with security parameter $\lambda$, and $P_{zt}$ is a zero-test parameter for level $\kappa$.

\textbf{Ring Sampler}: $c \leftarrow \mathnormal{samp(params)}$. This procedure outputs a ``level-zero encoding" $c \in S^{(\alpha)}_0$ for a nearly uniform element $\alpha \in_R R$.

\textbf{Encoding}: $c_k \leftarrow \mathnormal{enc(params,k,c)}$. This procedure outputs the ``level-i encoding" $c_k \in S^{(\alpha)}_i$ for a ``level-zero encoding" $c \in S^{(\alpha)}_0$.

\textbf{Re-Randomization}: $c' \leftarrow \mathnormal{reRand(params,i,c)}$. Procedure $reRand$ can re-randomize encodings relative to the same level i.

\textbf{Addition and negation}: $u' \leftarrow \mathnormal{add(params,i,u_1,u_2)}$ and $u' \leftarrow \mathnormal{neg(params,i,u_1)}$. The two procedures are corresponding to operation `+' and `$-$' in the above definition.

\textbf{Multiplication}: $u' \leftarrow \mathnormal{mul(params, i_1, u_1, i_2, u_2)}$. This procedure is for operation  `$\times$' in the above definition.

\textbf{Zero Testing}: $\mathnormal{isZero(params,P_{zt},u_{\kappa}) \overset{?}{=} 0/1}$. This procedure will check whether $u_{\kappa} \in S^{(0)}_{\kappa}$.

\textbf{Extraction}: $sk \leftarrow \mathnormal{ext(params, P_{zt},u_{\kappa})}$. This procedure extracts a ``canonical'' and ``random" representation of ring elements from their level-$\kappa$ encoding.

\end{definition}

\section{SOFA: Framework and A Basic Scheme}
\label{sec4}
As a concrete example for secure network middlebox outsourcing, we present SOFA  in detail in this section. First, we will introduce our framework which is comprised of two phases, the obfuscating phase and the online execution phase. Then we give a naive scheme by directly applying the theoretical conjunction obfuscation technique \cite{zvika2013obfus}. The discussion of its lack of feasibility will lead to our basic scheme, which is much more efficient in obfuscating the original firewall. And in the next section, we will further present another two approaches  to boost our scheme's efficiency in execution phase.


\subsection{Fundamental Framework}

A \textit{cryptographic obfuscator }of a program can make a program perform the same functionality while the detailed underlying instructions are hidden \cite{barak2012possibility}.
Combining the idea of \textit{program obfuscation}  and the Definition \ref{PPFP}, we observe that the privacy preserving firewall problem can be solved by constructing a cryptographic obfuscator $\mathcal{O}$ for firewall \textit{f} such that \textit{f\'}=$\mathcal{O}(f)$. The secrets are the underlying firewall rules, and they can be hidden in \textit{f\'}. Generally, this relationship between privacy preserving outsourcing requirement and cryptographic obfuscator construction can be extended to other middleboxes.

Our framework  SOFA consists of two phases. The first phase is the construction of an obfuscator for original firewall \textit{f}. And this work is done by the control plane of SDN in the local enterprise. The second phase is the execution of the obfuscated firewall \textit{f\'}=$\mathcal{O}(f)$  in the cloud. And this task should be done by the control plane of cloud firewall provider.

\textbf{Obfuscation Phase}: In the first phase, the controller of the enterprise SDN can set up the basic cryptographic parameters of the multilinear map. And then for each rule \textit{r} in the firewall, such as these in Table \ref{table1}, the control plane generates a corresponding atomic rule ``ciphertext" \textit{r\'}. Subsequently, the control plane can aggregate these atomic ciphertexts to be an integral secure firewall \textit{f\'}. And at last it sends  \textit{f\'} and some necessary public parameters to the cloud provider. This obfuscation phase for firewall can be done offline and once for all. And if there are some updates for the firewall \textit{f}, we can just reconstruct the related atomic rules, reassemble them and then replace the old ones in \textit{f\'} with them. But this process still should be done as efficiently as possible since one main purpose of outsourcing is to reduce costs for enterprises.

\textbf{Execution Phase}: After receiving \textit{f\'}, the cloud firewall provider performs the packet filtering task on the inbound and outbound traffics of the protected enterprise. For each packet, the high-performance devices in the cloud, under the supervision of its controller, securely checks its packet header according to \textit{f\'}.  If the packet is found satisfying a rule \textit{r\'},  the cloud firewall devices will take the corresponding actions associated with this rule, such as  permitting it,  or denying it.

The firewall obfuscation is implemented with cryptographic multilinear map. So in the following we  model the firewall rule and traffic packet header in the bitwise level.  For each rule, we present it in the format of $R = (\vec{v},W,A)$, where $\vec{v} \in {\lbrace 0,1\rbrace}^n$ specifies the expected bit, 0 or 1, on non-wildcard bits, the set $W \subseteq \left[ n \right] $ specifies the  ``wildcard'' bits, and $A$ indicates the associated actions. Note that actually set $W$ is often the subnet mask in firewall rules. For ease of presentation, we denote $\vec{v} \left[ i \right]=0$ if $i \in W$.  For instance, the rule \#1 in Table \ref{table1}, ``192.168.45.*'', can be presented as $\vec{v}=``1100000010101000..."$,
 $W={\lbrace 25,26,\dots,32\rbrace}$ and $A=\lbrace deny \rbrace$. In addition, $A$ is the output decision on a packet, it leaks little information about the detailed enterprise network, so we do not consider protecting it in this paper. Please see section \ref{sec5} for a detailed analysis of our schemes' security. Similarly, for a tarffic packet's TCP/IP header, we also present its related bits  as a n-bit vector $\vec{p} \in {\lbrace 0,1\rbrace}^n$.

\subsection{A Naive Scheme}
A naive and intuitive idea is to directly apply the theoretical construction of conjunction obfuscator\cite{zvika2013obfus} in obfuscating the firewall rules.
Here we give a concise description of this scheme. For each rule  \textit{r} in firewall \textit{f}, we present it as $\textit{r} \triangleq (\vec{v},W,A)$. Then for the $i$-th bit of it, we generate two pairs of level-1 encodings in the GES (We term them ``encoding pairs unit"):
\begin{align}
(u_{i,0},v_{i,0})&\in({S^{\rho_{i,0}}_1, S^{\rho_{i,0} \times \alpha_{i,0}}_1}), \label{Eq1}\\
(u_{i,1},v_{i,1})&\in({S^{\rho_{i,1}}_1, S^{\rho_{i,1} \times \alpha_{i,1}}_1})\label{Eq2}
\end{align}

In Equation (\ref{Eq1}) and (\ref{Eq2}), $\rho_{i,0},\alpha_{i,0},\rho_{i,1},\alpha_{i,1}$ are all randomly and uniformly chosen if  $i \notin W$, but for $i \in W$, they should also satisfy $\alpha_{i,0} = \alpha_{i,1}$. In addition to two pairs for each bit, an encoding pair is also generated for a whole rule:
\begin{equation}\label{Eq3}
(u_{n+1},v_{n+1}) \in(S^{\rho_{n+1}}_1,S^{\rho_{n+1}\times\prod_{i \in \left[ n \right] }\alpha_{i,\vec{v}\left[ i \right]}}_1)
\end{equation}

In the execution phase, for each obfuscated rule, we pick and multiply  encodings from each pair according to $\vec{p}$ and get the following two (n+1)-level encodings:
\begin{equation}\label{Eq4}
LHS=u_{n+1}\times \prod_{i \in \left[ n \right] } v_{i,\vec{p}\left[ i \right]},
RHS=v_{n+1}\times \prod_{i \in \left[ n \right] } u_{i,\vec{p}\left[ i \right]}
\end{equation}

At last procedure $\mathnormal{isZero}$ is performed to determine whether $LHS$ (Left-hand side) and $RHS$ (Right-hand side) are equal to judge whether packet $\vec{p}$ satisfies this rule and which actions to take.

Actually, the scheme we briefly presented above is the simplified symmetric counterpart for the original asymmetric construction in \cite{zvika2013obfus}, and the asymmetry is vital in its theoretical security proof to defend the DDH attack which may leak the wildcard locations. It is much harder and inefficient to construct such an asymmetric multilinear map. More importantly, as we all know, the wildcard part in a IP address is for indicating the subnet range, and these bits are always continuous in the later part. Nothing additional useful information about specific subnets or hosts, except for the the size of an unknown subnet, can be discovered without $\vec{v}$. So leaking this subnet mask rather than the specific prefixes ( $\vec{v}$ ) is acceptable in most firewall scenarios. Thus in this paper, we do not attempt to perfectly defend such DDH attack and herein we use symmetric multilinear map in all our schemes and leave such protection as future work. But we point out that following schemes are also fit for asymmetric one.

 \textbf{Discussion}.  Now we argue that this naive scheme will have serious issues in practical application.

 (1) Its time and space consumption in the obfuscating phase will be too large. If we have $l$ rules in a firewall, then the obfuscation will take $\Theta(ln+l)$ times encoding and re-randomization, and need store $\Theta(ln+l)$ encoding ciphertexts. In addition, when $n$ is large, the singe procedures will take more time and the storage will also be larger. So such privacy preserving scheme will be impractical due to the computation and storage cost in obfuscating phase is hardly acceptable.

 (2) When $n$ is large, it also will be much harder to construct the GES due to the noise limit. Thus to make the $2(n+1)$ multiplication operations and $\mathnormal{isZero}$ viable, the underlying GES parameters have to be set large enough. Consequently the real-time execution in cloud will be too inefficient to have any practical usability if we adopt the naive scheme directly.

Therefore we have to make up for these limitations to make it more practical.

\renewcommand{\algorithmicrequire}{\textbf{Input:}}
\renewcommand{\algorithmicensure}{\textbf{Output:}}

\begin{algorithm}[!t]
\caption{Basic Scheme For Obfuscating Firewall}
\label{Basic_phase1}
\begin{algorithmic}
\REQUIRE A original plain firewall \textit{f} with rule sets $\lbrace r\rbrace $.
\ENSURE A obfuscated firewall \textit{f\'} with the same functionality.

\STATE
$(params,P_{zt})\leftarrow\mathnormal{InstGen}(1^\lambda, 1^{n+1})$.

\FOR {$i \in \{1,\ldots,M+N\}$}
	\STATE
	$C_{\rho_{i,0}} \leftarrow \mathnormal{samp(\cdot)} ,\quad C_{\rho_{i,1}}\leftarrow \mathnormal{samp(\cdot)}$.
	\IF{$i \leq M$}
	
	\STATE $C_{\alpha_{i,0}} = C_{\alpha_{i,1}} \leftarrow \mathnormal{samp(\cdot)}$.
	\ELSE
	\STATE $C_{\alpha_{i,0}}\leftarrow \mathnormal{samp(\cdot)},\quad C_{\alpha_{i,1}} \leftarrow \mathnormal{samp(\cdot)}$.
	\ENDIF
	
	$C_\beta\leftarrow enc(\cdot,1,C_{\rho_{i,0}})$,
	$u_{i,0}\leftarrow reRand(\cdot,1,C_\beta)$.
	
	$C_\gamma \leftarrow enc(\cdot,1,C_{\rho_{i,0}} \times C_{\alpha_{i,0}})$,
	$v_{i,0} \leftarrow reRand(\cdot,1,C_\gamma)$.
	
	$C_\delta \leftarrow enc(\cdot,1,C_{\rho_{i,1}})$,
	$u_{i,1} \leftarrow reRand(\cdot,1,C_\delta)$.
	
	$C_\epsilon \leftarrow enc(\cdot,1,C_{\rho_{i,1}} \times C_{\alpha_{i,1}})$,
	$v_{i,1} \leftarrow reRand(\cdot,1,C_\epsilon)$.
	
	$C[i] \triangleq \lbrace(u_{i,0},v_{i,0}),(u_{i,1},v_{i,1})\rbrace$.
\ENDFOR
\STATE $C$=$\sigma(C)$.
\STATE $E=\lbrace \sigma(1),\ldots,\sigma(M)\rbrace$,
\STATE $UE=\lbrace \sigma(M+1),\ldots,\sigma(M+N)\rbrace$.
\FOR { $\textit{r} \triangleq (\vec{v},W,A) \in$ \textit{f} }

\FOR {$i \in \{1,\ldots,n\}$}
	\IF{$i \in W$}
	\STATE $P_r[i]\overset{R}{\nleftarrow} E$.
	\ELSE
	\STATE $P_r[i]\overset{R}{\nleftarrow} UE$.
	\ENDIF
		
\ENDFOR

 \STATE $\rho_{i,n+1}\leftarrow \mathnormal{samp(\cdot)}$,
 \STATE  $C_\theta \leftarrow enc(\cdot,1,C_{\rho_{i,n+1}})$,
	$u_{n+1} \leftarrow reRand(\cdot,1,C_\theta)$.

 $C_\mu \leftarrow enc(\cdot,1,C_{\rho_{i,n+1}} \times\prod_{i \in \left[ n \right] }C[P_r[i]]. C_{\alpha_{i,\vec{v}\left[ i \right]}})$,

 $v_{n+1} \leftarrow reRand(\cdot,1,C_\mu)$.

\STATE  \textit{r\'} $\triangleq$
	$\big(P_r,u_{n+1},v_{n+1}, A\big)$.
\ENDFOR
\RETURN  \textit{f\'}$\triangleq \big( {\lbrace C[i] \rbrace}_{i \in [M+N]}, \lbrace r' \rbrace, P_{zt}, params'\big)$.
\end{algorithmic}

\end{algorithm}

\subsection{Our Basic Scheme}
\label{OC}

Here, we first give a basic scheme by compressing the encodings to address the issue (1) discussed above. The computation and space complexity will be reduced from $\Theta(ln+l)$ to $\Theta(n+l)$ encodings. This scheme is not as secure as the naive one, but the amount of  leaked information is limited and acceptable ( please see section \ref{sec5} for details).

In the naive scheme, the encoding pairs $(u_{i,0},v_{i,0})$ and $(u_{i,1},v_{i,1})$ for every rule are  dependent on $W$ while the pair $(u_{n+1},v_{n+1})$ is dependent on $\vec{v}$. Now we give our basic scheme, in which the encoding pairs for each bit every rule are extracted and shared.

The obfuscating steps are summarized as Algorithm \ref{Basic_phase1}\footnote{For conciseness, we denote the parameter $params$ as `` $\cdot$ " in every primitive procedure in all our following algorithms. In addition, The $C$ we outsource do not contains $C_{\alpha_{i,\vec{v}\left[ i \right]}}$, but in the obfuscating phase, we use $C[P_r[i]]. C_{\alpha_{i,\vec{v}\left[ i \right]}}$ to denote the access for presentation convenience.} utilizing primitive procedures in GES. We first get $M+N$ encoding pairs units, of which the first $M$ units have equal random ratios in two encoding pairs while in the last $N$ units such ratios are different. Then we permute ($\sigma$) these encoding pairs units to shuffle them, and then collect two indexes sets:
\begin{align}
E&=\lbrace \sigma(1), \sigma(2),\ldots,\sigma(M)\rbrace,\\
UE&=\lbrace \sigma(M+1), \sigma(M+2),\ldots,\sigma(M+N)\rbrace.
\end{align}
Set $E$ collects the indexes of encoding pairs units with $\alpha_{i,0}= \alpha_{i,1}$ while set $UE$ collects those with $\alpha_{i,0}\neq \alpha_{i,1}$.

To obfuscate a rule $\textit{r} \triangleq (\vec{v},W,A)$, we attach an encoding pairs unit indexes array $p_r$ to \textit{r\'} such that if $i \in W$, then $p_r[i]$ is randomly sampled without replacement ($\overset{R}{\nleftarrow}$) from set $E$, otherwise it is randomly sampled without replacement from set $UE$. Thus $p_r$ specifies the $n$ encoding pairs units for rule $r$. Then $(u_{n+1},v_{n+1})$ for this rule is calculated according to Equation (\ref{Eq3}) by choosing encoding pairs in $C[P_r[i]]$ with $\vec{v}$. At last, public parameters $params'$ of GES and zero-test parameter $P_{zt}$ are also included in the obfuscated firewall \textit{f\'}. We emphasize that we should sample randomly without replacement rather than directly sampling with replacement. This trick is used to eliminate the possibility of false positive which may happen when two non-wildcard bits with different values chose the same encoding pair unit.

\begin{algorithm}[!t]
\caption{Execution with Outsourced Firewall \textit{f\'} }
\label{Basic_phase2}
\begin{algorithmic}
\REQUIRE A packet $\vec{p}={\lbrace 0,1\rbrace}^n$, and obfuscated firewall \textit{f\'}.
\ENSURE Decisions on $\vec{p}$.

\FOR { \textit{r\'} $\triangleq$
	$\big(P_r,u_{n+1},v_{n+1}, A\big) \in $ \textit{f\'} }
	\STATE
	
	 $LHS=u_{n+1}\times \prod_{i \in \left[ n \right] } C[P_r[i]].v_{i,\vec{p}\left[ i \right]} $,
	
	 $RHS=v_{n+1}\times \prod_{i \in \left[ n \right] } C[P_r[i]].u_{i,\vec{p}\left[ i \right]}$.

	 \IF {$\mathnormal{isZero}(params',P_{zt},LHS - RHS)$}
	 	\RETURN $A$ and exit.
	 \ELSE
	 	\STATE \textit{Continue} checking $\vec{p}$ with the next rule.
	 \ENDIF
	
\ENDFOR
\end{algorithmic}
\end{algorithm}

In the second phase, the execution of  outsourced firewall \textit{f\'} is straightforward. The basic  steps are summarized in Algorithm \ref{Basic_phase2}. For an input packet $\vec{p}$, we check whether it satisfies an underlying rule in \textit{f\'}  sequentially. Note that for each obfuscated rule \textit{r\'}, the multiplications are also performed on encodings chose from encoding pairs units $C$ specified by $\vec{p}$ and $P_r$.

Our scheme's correctness can be verified easily. In Algorithm \ref{Basic_phase1}, only two encodings are generated to a specific rule. The overheads of $P_r$ generation is negligible compared to encoding and re-randomization, thus it is easy to figure out  the complexity of Algorithm \ref{Basic_phase1} is $\Theta(n+l)$. Thus the basic scheme is efficient in the obfuscation phase. The efficiency in execution phase is not bad, but needs further improvement.


\section{Online Efficiency Improvement}
\label{Improve}
In this section, we try to improve the efficiency and usability of our scheme in the online execution phase. As we have discussed previously, the online efficiency is vital for middlebox outsourcing scheme. The two enhanced schemes we propose here can reduce its complexity from $\Theta(n)$ to $\Theta(k)$ or $\Theta(n/k)$ on faster encodings' multiplication operations. The cost is $\Theta(1)$ increasement in encoding and re-randomization in the  first phase.

The main guideline for our improvement is to reduce overheads of single multiplication and the times of multiplication. To this end, in both of the following approaches, we partition an original bit-wise rule into several parts. In the blocking approach, we then construct an encoding pair for an element in each part rather than for each rule bit, and thus we can check a packet in block level rather bit level. In the divide-and-conquer approach, we then test if a packet satisfies the rule in each part separately and sequentially with less encoding levels.

\subsection{Blocking Scheme}

We find that if we view a firewall rule in a higher level rather than the primitive bit level, there will be fewer occurrences of multiplication needed in online phase. If the block is small enough, the single multiplication will also take less time. To this end, we take another approach of firewall obfuscation by constructing in block level rather than bitwise level.
The basic steps in the first phase are described as algorithm \ref{practical_phase1}.

\begin{algorithm}[!t]
\caption{ Firewall Obfuscation (Blocking Approach)}
\label{practical_phase1}
\begin{algorithmic}
\REQUIRE A original plain firewall \textit{f} with rule sets $\lbrace r\rbrace $.
\ENSURE A obfuscated firewall \textit{f\'} with the same functionality.

\STATE Generate parameters:

$(params,P_{zt})\leftarrow\mathnormal{InstGen}(1^\lambda, 1^{k+1})$.

\FOR { $\textit{r}\triangleq ({\lbrace F_i\rbrace}_{i \in [k]}, A) \in$ \textit{f}}

	\FOR {$i \in \{1,\ldots,k\}$}
		\STATE
		$C_{\eta_i} \leftarrow \mathnormal{samp(\cdot)}$.

		\FOR { $j \in I_i$}
			\STATE
			$C_{\beta_j} \leftarrow \mathnormal{samp(\cdot)}$
	
			\IF{$j \in S_i$}	
				\STATE  $C_{\alpha_j} = C_{\eta_i}$.
			\ELSE
				\STATE
     			$C_{\alpha_j} \leftarrow \mathnormal{samp(\cdot)}$.
			\ENDIF		
		
			\STATE
			$C_\gamma \leftarrow enc(\cdot,1,C_{\beta_j})$,
			$C_\delta \leftarrow enc(\cdot,1,C_{\beta_j}\times C_{\alpha_j})$,
			
			$u_{i,j} \leftarrow reRand(\cdot,1,C_\gamma)$, 		
	 		$ v_{i,j} \leftarrow reRand(\cdot,1,C_\delta)$.
		
		\ENDFOR
	\ENDFOR

	\STATE
 	$C_{\rho_{k+1}}\leftarrow \mathnormal{samp(\cdot)}$.

 	\STATE
	$C_\epsilon \leftarrow enc(\cdot,1,C_{\rho_{k+1}})$,
	
	$u_{k+1} \leftarrow reRand(\cdot,1,C_\epsilon)$,
	
	 $C_\theta \leftarrow enc(\cdot,1,C_{\rho_{k+1}}\times\prod_{i \in \left[ k \right] }C_{\eta_i})$,
	
	 $v_{k+1} \leftarrow reRand(\cdot,1,C_\theta)$.
	
	\STATE  \textit{r\'} $\triangleq
	\big({{\lbrace ( u_{i,j},v_{i,j} )\rbrace}}_{i \in \left[ k \right], j \in I_i}, u_{k+1},v_{k+1}, A  \big)$.

\ENDFOR

\RETURN  \textit{f\'} $\triangleq \big(\lbrace r' \rbrace, P_{zt}, params' \big)$.
\end{algorithmic}
\end{algorithm}

Note that for brevity and highlighting the key idea, this algorithm is described based on the naive scheme. The encoding pairs compression idea in basic scheme can be easily integrated in algorithm \ref{practical_phase1}. We omit this integration here.

Specifically, in the first phase, for a firewall rule, we split it into $k$ fields, each with a small integer range. Take the second rule in table \ref{table1} as an example, we can split its IP source address header into four fields, each with the same domain $I_{1,2,3,4}=[0,255]$ (Just as its dotted decimal notation). In each field, the rule specifies a target filter set, such as  $F_{1}={192},F_{2}={168},F_3={45}$ and $F_4=[0,255]$ for the above example. Another advantage of this approach is that it enables us to process some rare extended ACLs with port range. For example, for the destination port range in the rule \#3 in table \ref{table1}, we can directly set the corresponding target filter set as $[22,88]$. So now a rule can be presented as $\textit{r}\triangleq ({\lbrace F_i\rbrace}_{i \in [k]}, A)$.

After this partition, for each field $i$ within the domain $I_i$, we ``encrypt" each element in it with a level-1 encoding pair. The key point is that, for all elements that are in the filter set $F_i$, the underlying ratios of the two encodings in their encoding pairs are all equal to a constant number $\eta_i$,  which is  specified for each field in this obfuscation. Finally, just similar to  basic scheme, these ratios in each field are aggregated in a product of them, which is also hidden as a ratio in another level-1 encoding pair $(u_{k+1},v_{k+1})$:
\begin{equation}
(u_{k+1},v_{k+1}) \in (S^{\rho_{k+1}}_1,S^{\rho_{k+1} \times \prod_{i \in \left[ k \right] }\eta_i}_1)
\end{equation}


Then, in the second phase, as a consequence of the simplification in the first phase, the burden of the cloud firewall provider will be greatly relieved. For a packet $\vec{p}$, cloud firewall provider also splits its packet header into $k$ blocks, which means that we can present it in the format of an integer tuple $(p_1,\dots,p_k)$. Then an encodings pair in each field is chosen according to the integer tuple value followed by the computation of two level-($k+1$) encodings,$LHS$ and $RHS$, by multiplying them.  At last the equality of $LHS$ and $RHS$ is tested and thus a decision upon this packet can be made. The modification of algorithm 2 to coordinate with algorithm 3 is easy, so we leave out the details here.

\subsection{Divide-and-Conquer Scheme}

To reduce overheads of a single multiplication, we can decrease the maximum levels in GES. So another strategy we can take is divide-and-conquer. We divide the whole bit-string rule $(\vec{v},W, A)$ into $k$ parts: $(\vec{v},W,A)=(\lbrace\vec{v}_i,W_i\rbrace_{1\leq i\leq k},A)$. And in each part, we construct a obfuscated $r'_i$ in the same way as algorithm 1. Thus the corresponding obfuscated firewall will be  $f'\triangleq \big( \big \lbrace  {\lbrace r'_i \rbrace}_{1\leq i\leq k} \big\rbrace, {\lbrace P_{zt}^i\rbrace}_{1\leq i\leq k}, params'\big)$. Similarly, in the second phase, we also partition the packet in the same way, and get $\vec{p}=(\vec{p}_{i})_{1\leq i\leq k}$. Then we can check whether $\vec{p}$ matches the underlying $r$ by checking $\vec{p}_{i}$ with  $r'_i$ in each part sequentially with algorithm 2.

This approach is simple, but effective, as will be demonstrated in our experiments. It is not hard to modify the basic scheme in this approach, so we omit the details here.


%

\section{privacy analysis}
\label{sec5}
In this section, we will take a close look at the security of the privacy preserving schemes we proposed.
\subsubsection{Analysis of the basic scheme}
In \cite{zvika2013obfus}, the authors have proved the naive scheme is secure when the inputting function is drawn from a distribution with a high superlogarithmic  entropy given the GXDH assumption  and the GCAN assumption \cite{zvika2013obfus} are true or in the \textit{generic model} where the adversary is only allowed to manipulate encodings via oracle access.

When parameters $M$ and $N$ are large enough, our basic scheme achieves the same security as the naive scheme and the security proof is roughly the same as proofs provided in \cite{zvika2013obfus}. Due to both completeness considerations and the page limit, below we present the proof sketch for the high entropy input distribution case only and neglect the proof in the generic model.
\begin{theorem}\label{thm:alg12sec}
When $M$ and $N$ are large enough so that all $P_r[i]$ are different, Algorithms 1 and 2 are secure when $(\vec{v},W)$ is drawn from a distribution where the entropy of $\vec{v}$ given $W$ is superlogrithmic given the GXDH assumption and the GCAN assumption are true.
\end{theorem}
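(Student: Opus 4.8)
The plan is to reduce the security of the basic scheme (Algorithms~\ref{Basic_phase1} and~\ref{Basic_phase2}) to the security of the naive conjunction obfuscator of \cite{zvika2013obfus} and then invoke their distributional virtual-black-box theorem. The target notion is the one used there: the obfuscated firewall $f'$ output by Algorithm~\ref{Basic_phase1} is computationally indistinguishable from a simulated object produced from only the public side information --- the number of rules $l$, the length $n$, the parameters $M,N$, and the wildcard sets $\{W\}$ and actions $\{A\}$ (which the paper already declares non-private) --- but \emph{without} the $\vec{v}$'s; this in particular yields clause~(3) of Definition~\ref{PPFP}. The hypothesis that $\vec{v}$ given $W$ has superlogarithmic entropy is what lets us quote the assumptions of \cite{zvika2013obfus} for the $\vec{v}$-dependent part of each rule.

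First I would carry out the structural reduction. Under the hypothesis that $M$ and $N$ are large enough that all the $P_r[i]$ (over every rule $r$ and position $i$) are distinct, the sampling of $P_r$ in Algorithm~\ref{Basic_phase1} is exactly a uniformly random injective assignment of wildcard positions into the index set $E$ and of non-wildcard positions into $UE$. Consequently, conditioned on the arrays $\{P_r\}$, the encoding-pairs units $\{C[i]\}$ are mutually independent: each $E$-type unit is a fresh level-$1$ pair with ratio $\alpha_{i,0}=\alpha_{i,1}$, each $UE$-type unit is a fresh pair with independent ratios $\alpha_{i,0}\neq\alpha_{i,1}$, and each rule's $(u_{n+1},v_{n+1})$ uses a fresh randomizer $\rho_{n+1}$. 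Therefore the distribution of $f'$ is identical to the one obtained by (i) running the naive (symmetric-analogue) obfuscator of \cite{zvika2013obfus} independently on each rule $r$, (ii) pooling all the per-bit units into one list and appending the appropriate number of extra freshly sampled $E$-type and $UE$-type units so that there are $M+N$ of them in total, and (iii) applying a random permutation $\sigma$ and rewriting each rule's unit references as $P_r$. Steps (ii)--(iii) are an efficient public post-processing, and the appended units are independent of the $\vec{v}$'s; hence any distinguisher against the basic scheme yields one against the collection of naive obfuscations with the same advantage.

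Next I would invoke the theorem of \cite{zvika2013obfus}: for a single conjunction whose non-wildcard pattern has superlogarithmic entropy, the naive obfuscation is indistinguishable from a simulation under GXDH and GCAN. Passing from one rule to $l$ rules is a standard hybrid argument over the rules, which goes through precisely because, as just noted, the per-rule obfuscations are generated with mutually independent randomness. The simulator for the whole firewall then, for each rule, produces the $E$/$UE$-typed units consistent with $W$, samples a fresh uniform ring element $\eta$ and sets $(u_{n+1},v_{n+1})$ to encode $\eta$ in place of $\prod_{i\in[n]}\alpha_{i,\vec{v}[i]}$, and finally pools, pads, permutes, and re-indexes as in (ii)--(iii); indistinguishability of this simulation from $f'$ is exactly what the single-rule theorem plus the hybrid provides. (The generic-model variant is handled analogously but is omitted here, as in \cite{zvika2013obfus}.)

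The main obstacle is the single-rule step itself: showing that the genuine $(u_{n+1},v_{n+1})$, which encodes the $\vec{v}$-dependent product $\prod_{i\notin W}\alpha_{i,\vec{v}[i]}\cdot\prod_{i\in W}\alpha_{i,0}$, is indistinguishable from an encoding of a uniformly random ring element even in the presence of all the level-$1$ encoding-pairs units and the zero-test parameter $P_{zt}$, whenever $\vec{v}\mid W$ carries superlogarithmic entropy. This is where GXDH and GCAN are actually used, via a sequence of hybrids that swaps the ratios position by position, accumulating enough hidden entropy into the product to make it pseudorandom. Reproducing that argument in detail is the delicate part, and since it is essentially identical to the proof in \cite{zvika2013obfus} with the asymmetric groups collapsed to a single symmetric one, we only sketch it here.
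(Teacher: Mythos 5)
Your proposal is correct at the level of rigor of the paper's own sketch, but it is organized differently. The paper proves the theorem by a single direct simulation of the whole adversary view: the simulator $\mathcal{S}$ replaces $\{C[i]\}_{i\in[M+N]}$ by $4(M+N)$ uniformly random level-1 encodings, replaces each $P_r$ by $n$ distinct random indices in $[M+N]$ (without using $W$), replaces each $(u_{n+1},v_{n+1})$ by fresh random encodings, and argues indistinguishability in one shot from GXDH (for the units) and GCAN plus the entropy hypothesis (for the final pair). You instead first prove a structural identity --- under the hypothesis that all $P_r[i]$ are distinct, the output of Algorithm 1 is exactly a public post-processing (pooling, padding with fresh typed units, permuting, re-indexing) of $l$ independent runs of the naive per-rule obfuscator --- and then do a hybrid over rules and invoke the single-conjunction theorem of \cite{zvika2013obfus} as a black box. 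Your decomposition buys two things the paper's sketch lacks: it makes explicit exactly where the ``$M,N$ large enough so that all $P_r[i]$ differ'' hypothesis is used, and it cleanly isolates the only hard step (pseudorandomness of $(u_{n+1},v_{n+1})$ given the level-1 units and $P_{zt}$) in the external theorem. The price is that your hybrid must cope with the fact that all rules share one GES instance $(params', P_{zt})$, so the per-rule obfuscations are not literally independent instantiations; the reduction has to be able to generate the other rules' encodings itself under the shared public parameters (which CLT-style public re-randomization permits, but you should say so). A second, more substantive difference: your simulator is explicitly given the wildcard sets $W$ and produces $E$/$UE$-typed units accordingly, i.e.\ you treat wildcard locations as leaked, which matches the paper's stated tolerance of the DDH-style wildcard leakage in the symmetric setting; the paper's simulator instead samples everything, including $P_r$, uniformly without consulting $W$, a nominally stronger claim that sits in some tension with that conceded leakage. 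Neither version fills in the single-rule argument, so on that point you are no less complete than the paper.
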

\begin{proof}
Same as \cite{zvika2013obfus}, by claiming Algorithm 1 and 2 are secure we mean that, for each rule $r$, anything that the cloud firewall provider or the adversary learns from the execution of the two algorithms can be learned from a blackbox firewall by constructing a simulator  $\mathcal{S}$ that simulates the adversary's view from algorithms 1 and 2 given only the access of a blackbox firewall.

In particular, the adversary's view is the obfuscated firewall output by Algorithm 1:$\big( {\lbrace C[i] \rbrace}_{i \in [M+N]}, \lbrace r' \rbrace, P_{zt}, params'\big)$.
$\mathcal{S}$ simulates $ {\lbrace C[i] \rbrace}_{i \in [M+N]}$ with $4(M+N)$ uniformly random level-1 encodings: $(\{u'_{i,0},v'_{i,0}, u'_{i,1},v'_{i,1}\}_{i\in[M+N]})$. $S$ first runs \textit{InstGen} and get a random $P'_{zt}$ and random public parameters $params''$. For each $r'$, $S$ can simulate $P_r$ with $n$ different random samplings in $[M+N]$ ,  simulate $(u_{n+1},v_{n+1})$ with two uniformly random level-1 encodings: $(u'_{n+1},v'_{n+1})$, simulates $A$ with $A$ itself, and simulates $P_{zt}$ and $params'$ with $P'_{zt}$ and $params''$. Due to the GCAN assumption and the entropy of $\vec{v}$ given $W$ is superlogrithmic, we know  $u_{n+1},v_{n+1}$ is indistinguishable from uniform thus is independent with any pair in $\{C[i]\}$. In addition, due to the GXDH assumption, we know each $C[i]$ is indistinguishable from $(u'_{i,0},v'_{i,0}, u'_{i,1},v'_{i,1})$, and $(u_{n+1},v_{n+1})$ is indistinguishable from $(u'_{n+1},v'_{n+1})$. Thus we know $\mathcal{S}$ simulates the algorithm 1's output for ($(\vec{v},W)$) input.
For algorithm 2's output, the $S$ can simply use the access of the blackbox firewall to get it.
\end{proof}

When $M$ and $N$ are not large enough, Algorithm 1 reveal partial information about the rules. The leakage could happen
when for different rules $r_1$ and $r_2$, there exist $i,j\in \{1,\ldots,n\}$ such that $P_{r_1}[i]=P_{r_2}[j]$.
In this case, the adversary would notice similar ``wildcard or not'' information between two bit locations in two rules.
We can figure out that the probability for such leakage is
$1-\big( \frac{\binom{M}{|W_1|} \cdot \binom{M-|W_1|}{|W_2|}} {\binom{M}{|W_1}\cdot \binom{M}{|W_2|}} \big) \cdot
 \big( \frac{\binom{N}{m_1} \cdot \binom{N-m_1}{m_2}} {\binom{N}{m_1}\cdot \binom{N}{m_2}} \big)
 = 1- \big( \frac{(M-|W_1|)!\cdot (M-|W_2|)!}{M!\cdot (M-|W_1|-|W_2|)!}\big) \cdot
\big( \frac{(N-m_1)!\cdot (N-m_2)!}{N!\cdot (N-m_1-m_2)!}\big)
$, in which the $W_1$, $W_2$ is the wildcard bits set in $r_1$ and $W_2$ each, and $m1=n-|W_1|, m2=n-|W_2|$. So if such leak is unacceptable, we may increase the value of $M$ and $N$ in practice.


\subsubsection{Analysis of the Blocking Scheme}
The main difference between the basic and blocking scheme is that Algorithm 3 ``encrypt"s each integer in each field rather than ``0" or ``1" on each bit.
For similar reasons as Algorithm 1, we also give a proof sketch for Algorithm 3's security in the case of high entropy inputs and leave out the detailed proofs. Note that  Algorithm 3 itself achieves the same security as the naive scheme's security. So when integrated with basic scheme, the blocking scheme is as secure as Algorithm 1.
\begin{theorem}Algorithm 3 is secure when $(\vec{v},W)$ is drawn from a distribution where the entropy of $\vec{v}$ given $W$ is superlogrithmic given the GXDH assumption and the GCAN assumption are true.
\end{theorem}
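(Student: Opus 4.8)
The plan is to reuse, essentially verbatim, the simulation-based argument of Theorem~\ref{thm:alg12sec}, adapting the simulator to the block-level encodings produced by Algorithm~3. As before, saying that Algorithm~3 is secure means that there is a simulator $\mathcal{S}$ which, given only blackbox access to the firewall, produces a distribution computationally indistinguishable from the real obfuscated firewall $f' \triangleq (\lbrace r' \rbrace, P_{zt}, params')$, where each $r' \triangleq (\lbrace (u_{i,j},v_{i,j})\rbrace_{i\in[k],\, j\in I_i}, u_{k+1}, v_{k+1}, A)$. First $\mathcal{S}$ runs $\mathnormal{InstGen}(1^\lambda,1^{k+1})$ to get a fresh zero-test parameter $P'_{zt}$ and fresh public parameters $params''$. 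Since $k$ and the per-block domains $I_i$ are public, for each rule it outputs, in place of the $2\sum_{i\in[k]}|I_i|$ block encodings, that many independent uniformly random level-$1$ encodings $\lbrace (u'_{i,j},v'_{i,j})\rbrace$; in place of $(u_{k+1},v_{k+1})$ two fresh uniformly random level-$1$ encodings $(u'_{k+1},v'_{k+1})$; it copies $A$; and it reports $P'_{zt},params''$ for $P_{zt},params'$. The input--output behavior of the (block-adapted) Algorithm~2 is then simulated using only the blackbox firewall, exactly as in the proof of Theorem~\ref{thm:alg12sec}.

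The indistinguishability is then argued in two stages, paralleling the earlier proof. In the first stage I invoke the GCAN assumption together with the hypothesis that the entropy of $\vec{v}$ given $W$ is superlogarithmic: the secret content of a rule --- the filter sets $\lbrace F_i\rbrace$, equivalently $(\vec{v},W)$ in the bitwise view --- retains this entropy, so the aggregate ratio $\rho_{k+1}\prod_{i\in[k]}\eta_i$ carried by $(u_{k+1},v_{k+1})$ is indistinguishable from uniform and, in particular, statistically independent of every block pair $(u_{i,j},v_{i,j})$; an adversary able to correlate $(u_{k+1},v_{k+1})$ with the block pairs would effectively be recovering the $\eta_i$-membership pattern, i.e.\ the filter sets, contradicting GCAN or the entropy bound. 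In the second stage I apply the GXDH assumption to the family of block pairs: within block $i$, the $|F_i|$ pairs sharing the secret ratio $\eta_i$ together with the $|I_i|-|F_i|$ pairs carrying fresh random ratios are jointly indistinguishable from $|I_i|$ independent uniform pairs, and likewise $(u_{k+1},v_{k+1})$ is indistinguishable from $(u'_{k+1},v'_{k+1})$. A standard hybrid over rules and over blocks then shows that $\mathcal{S}$'s output is indistinguishable from Algorithm~3's, and the claim about Algorithm~2 is immediate, giving the theorem.

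The step I expect to be the main obstacle is the first stage: arguing that the within-block shared-ratio structure --- all elements of $F_i$ carrying the same $\eta_i$ --- together with the aggregated pair does not leak the filter sets, even though (unlike the bitwise basic scheme) a block may hold many pairs, up to $|I_i| = 256$, of which only a few share $\eta_i$. Pinning this down requires (a) checking that the GXDH instance invoked here really does cover this ``one shared ratio among $|I_i|$ pairs, simultaneously across all $k$ blocks'' configuration, rather than just the two-pairs-per-bit configuration of Algorithm~1, and (b) verifying that the filter-set entropy induced by a superlogarithmic-entropy $(\vec{v},W)$ is still superlogarithmic after the bit-to-block regrouping, which needs a brief separate check for degenerate block partitions (e.g.\ a block whose domain $I_i$ is a single value). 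Once these two points are settled, the remaining hybrid bookkeeping follows the template of Theorem~\ref{thm:alg12sec} with no essential new difficulty.
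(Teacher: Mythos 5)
Your proposal follows essentially the same route as the paper's own proof sketch: a simulator that replaces the $2\sum_{i\in[k]}|I_i|$ block encoding pairs and the aggregate pair $(u_{k+1},v_{k+1})$ with uniformly random encodings, invoking GXDH for indistinguishability of the block pairs and GCAN plus the superlogarithmic-entropy hypothesis for the independence and uniformity of the aggregate pair, with blackbox access handling the execution phase. The additional caveats you raise (whether GXDH covers the shared-$\eta_i$-among-$|I_i|$-pairs configuration and whether entropy survives the bit-to-block regrouping) are reasonable refinements the paper's sketch glosses over, but they do not change the approach.
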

\begin{proof}
The proof goes similarly as the Proof of Theorem \ref{thm:alg12sec}. The only difference here is $\mathcal{S}$ simulates algorithm 3's output $\big({{\lbrace ( u_{i,j},v_{i,j} )\rbrace}}_{i \in \left[ k \right], j \in I_i}, u_{k+1},v_{k+1}\big)$ with $2\Sigma^k_1{|I_i|}+2$ uniformly random level-0 encodings: $(\{u'_{i,j},v'_{i,j}\}_{i\in[n],j\in I_i}, u'_{k+1},v'_{k+1})$. Due to the GXDH assumption, we know for each $i\in[n]$, $\{<u'_{i,j},v'_{i,j}>\}_{j\in I_i}$ is still indistinguishable with $\{<u_{i,j},v_{i,j}>\}_{j\in I_i}$. Due to the GCAN assumption and the entropy assumption, we know each $\{<u_{i,j},v_{i,j}>\}_{j\in I_i}$ and $(u_{k+1},v_{k+1})$ are independent with each other, and $(u_{k+1},v_{k+1})$ is indistinguishable with $(u'_{k+1},v'_{k+1})$. Therefore, $\mathcal{S}$ simulates algorithm 1's output for ($(\vec{v},W)$) input.
\end{proof}

\subsubsection{Analysis of the Divide-and-Conquer Scheme}
The divide-and-conquer scheme break the matching of one rule into several $k$ parts. It is easy to see it has the same amount of security except it reveals the matching results of all parts in addition.

\section{Experiments}
\label{sec6}

To evaluate the feasibility and performance of SOFA, we implement our algorithms in C++, and carry out experiments on real-world firewall rules. The cryptographic multilinear map we employ is the one constructed over integers\cite{coron2013practical, mmapcode}.

\subsection{Experiment Setup}
 Our experiments are conducted on a PC running Linux  with 2.90 GHz  \textit{Inter(R) Core(TM) i7} CPU and 8 GB RAM. And we use 50 standard real-world firewall rules to evaluate the scalability of our algorithms. Note that our blocking scheme can also deal with the extended rules with minor modification.  To leave out unnecessary details, these rules are preprocessed such that  only the fields in the IP/TCP header are included in $r$. In reality, many firewall rules belong to this type.

The generation of an instance of a practical multilinear map involves  the configuration of many parameters\cite{coron2013practical}. And the configuration will  affect its performance and  usability, which is related to the control of the ``noise". So it should be tuned with specific application. Thus we do not present the tedious detailed parameters of our experiments here. Instead our settings ensure the ``noise" in our experiments is under control such that no correctness will be sacrificed for efficiency.

\begin{figure}[!t]

\centering
\includegraphics[height=1.7in,width=2.5in]{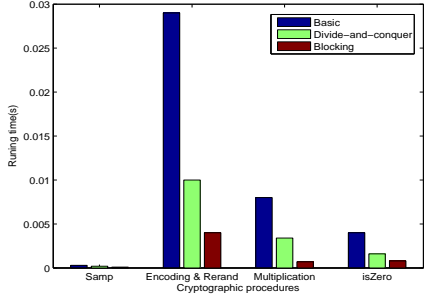}
\caption{ Computation overheads of primitive procedures in GES}
\label{primitive}
\end{figure}

\subsection{Primitive Cryptographic Overheads}
The procedures of GES that we have defined are the basic steps in our algorithms, so we first measure the efficiency of these primitive cryptographic operations. We set the multilinearity level $\kappa$=5, 9 and 33, which are the parameters for obfuscating a standard rule each for the basic scheme, the divide-and-conquer scheme and the one with blocking approach. The results are shown in figure \ref{primitive}.

We can observe from the results that the \textit{ring sampler} takes only less than 1 millisecond. The \textit{encoding and re-randomization} procedure are always performed together, so we measure them together, and we can observe that the multilinearity level affects its running time greatly.
The \textit{ multiplication} procedure is the major operation in the second phase, and it can be seen that its efficiency is also strongly affected by the multilinearity level, which determines the specific system parameters. We can also see that the \textit{zero testing} procedure is relatively lightweight in computation overheads.

\subsection{Performance in Obfuscation Phase}
The first phase, the obfuscation of original firewall rules, is performed locally by the enterprise SDN controller. We first measure the impact of the bit numbers $n$ (note that the corresponding multilinearity level in GES is $n+1$) on the performance of naive scheme (Fig.\ref{fig_first_case}). The time overheads of procedure \textit{Instance Generation} and  the obfuscating time of a single standard rule are measured separately. From Fig.\ref{fig_first_case}, we can see that both of them grow linearly with respect to the bit size. Specifically, when $n=32$ ( which we need when naive scheme is adopted to protect the standard rules), the first takes about 0.8 seconds, while the obfuscating for a rule takes about 4.1 seconds. This shows the inefficiency of naive scheme.
\begin{figure}[!t]
\centering
\subfloat[Naive Scheme with Various Bits]{\includegraphics[width=1.7in]{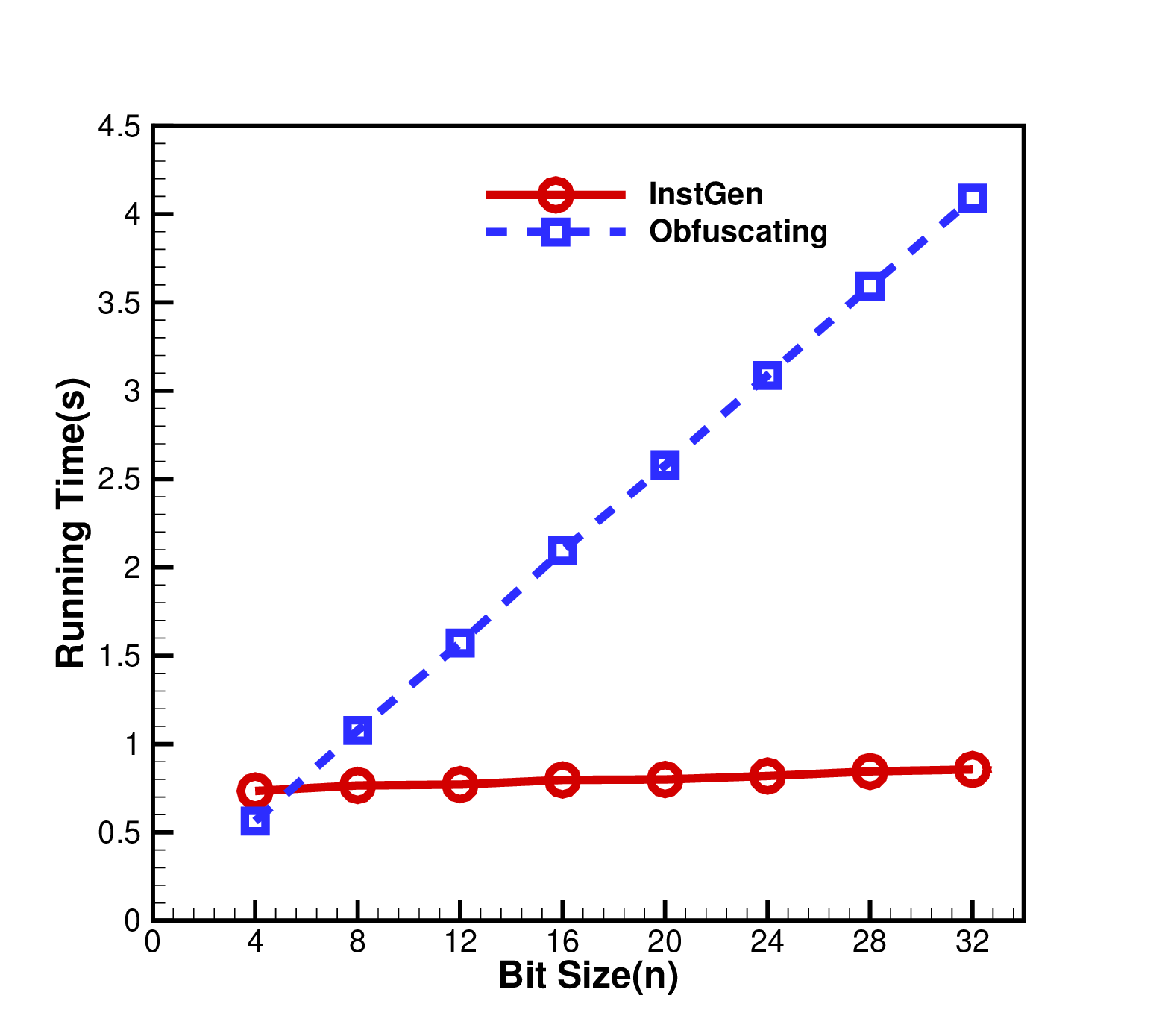}%
\label{fig_first_case}}
\hfil
\subfloat[ Naive and Basic scheme]{\includegraphics[width=1.7in]{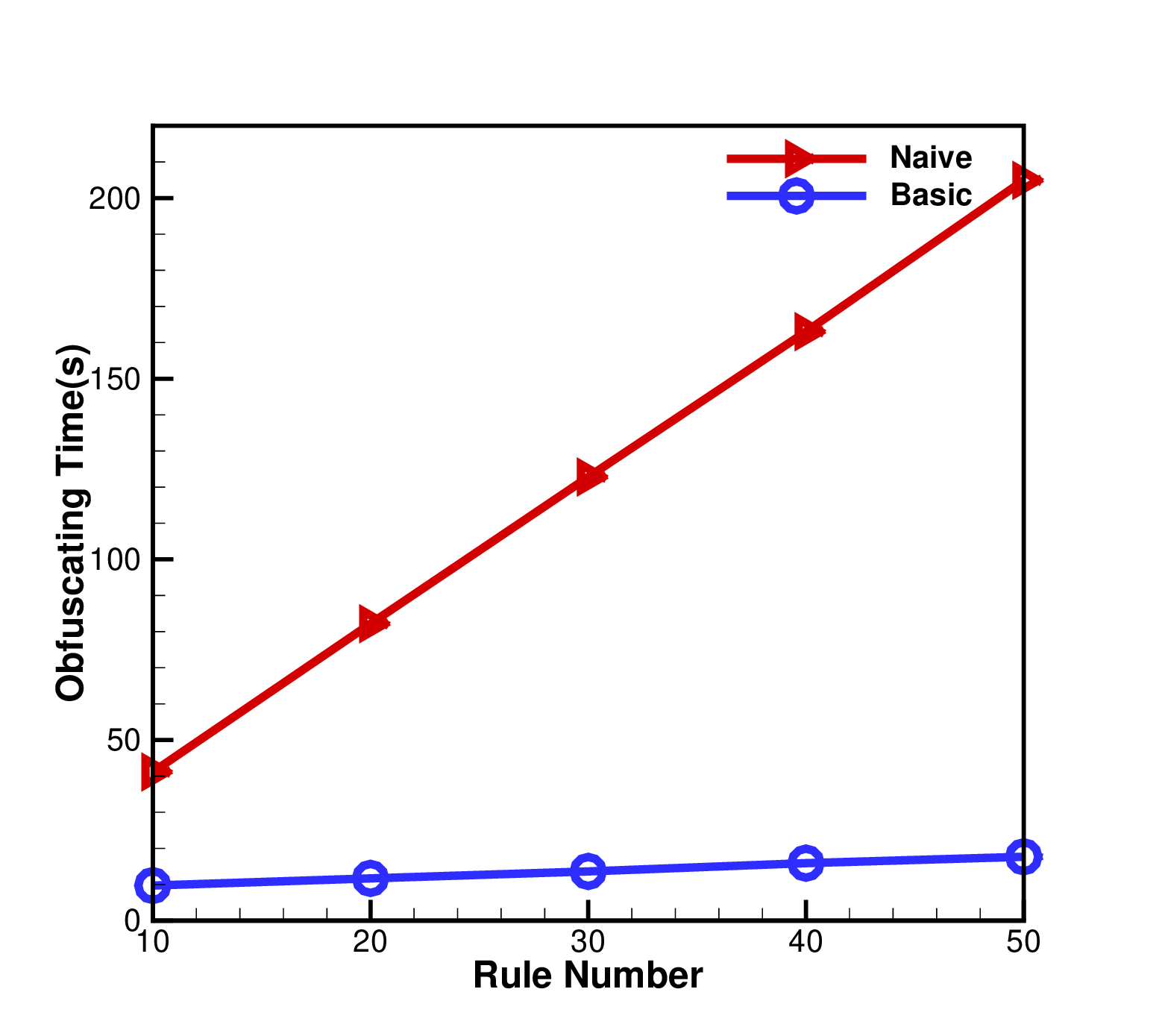}%
\label{fig_second_case1}}
\hfil
\subfloat[ Divide-and-Conquer Scheme]{\includegraphics[width=1.7in]{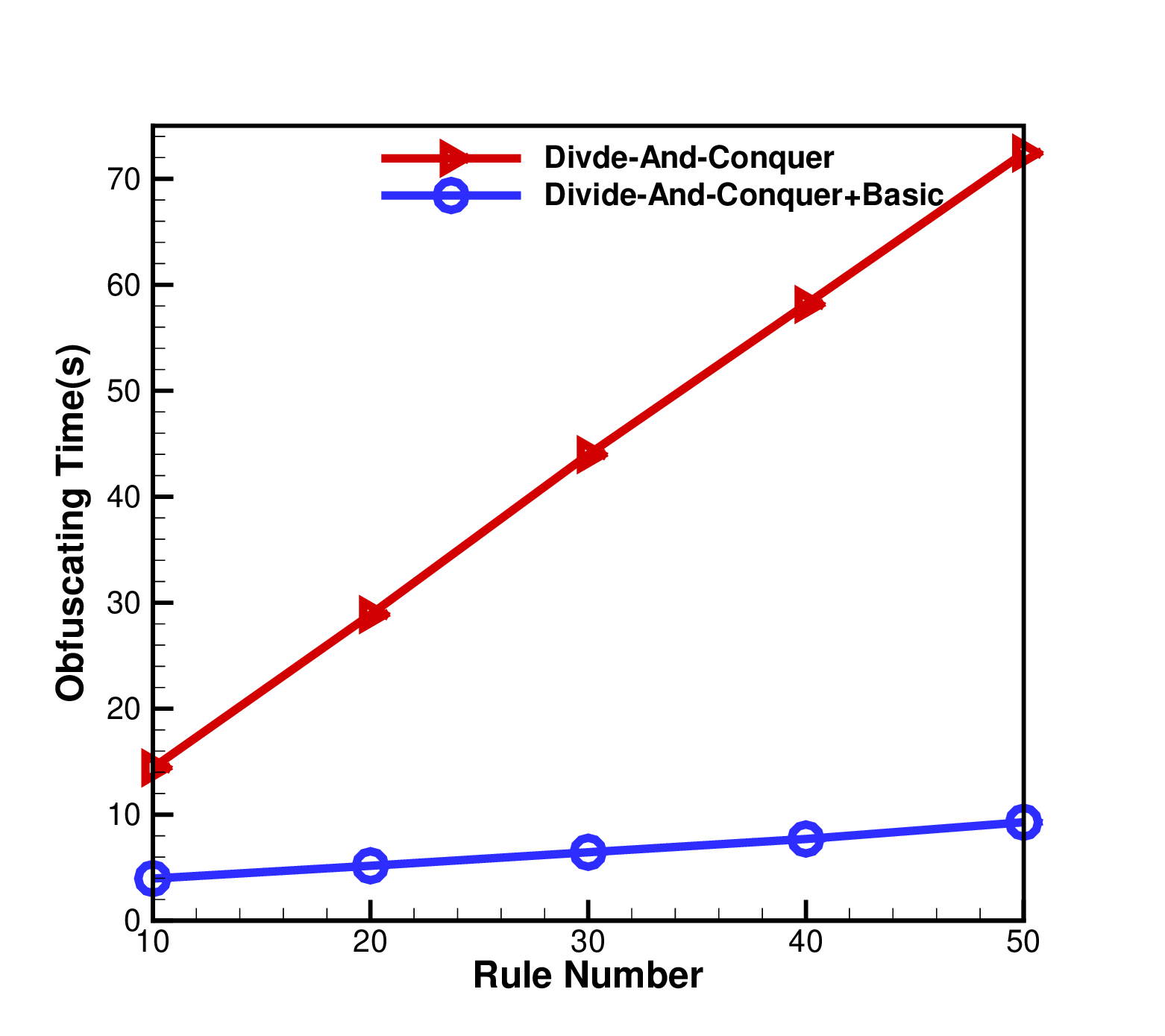}%
\label{fig_second_case2}}
\hfil
\subfloat[ Blocking Scheme]{\includegraphics[width=1.7in]{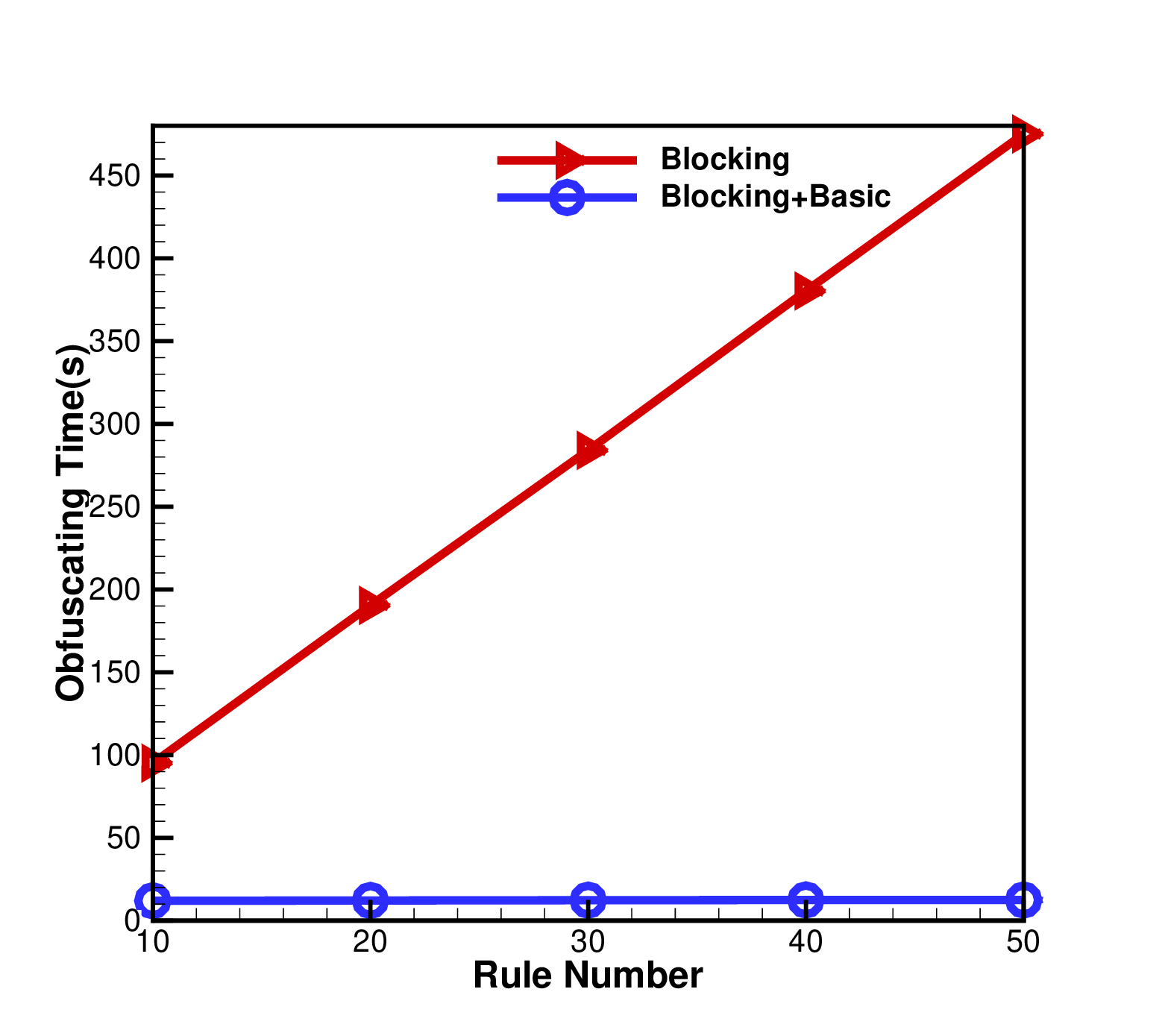}%
\label{fig_second_case3}}
\caption{ Time Performance in the First Phase}
\label{obfuscation}
\end{figure}

\begin{figure}[!t]

\centering
\includegraphics[height=2.0in,width=2.5in]{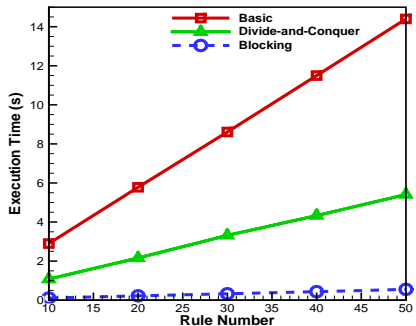}
\caption{Time Performance in the Second Phase}
\label{execution}
\end{figure}

 And we also measure the scalability of the schemes we proposed (Fig.\ref{fig_second_case1}, Fig.\ref{fig_second_case2}, Fig.\ref{fig_second_case3}) and the effect of basic scheme on other approaches. The results show that by the encodings compression of our basic scheme, the obfuscation of original rules can be greatly accelerated. This is in line with our theoretical analysis. These results also show that our framework is elastic with different number of rules.

\subsection{Performance in Execution Phase}
The performance in the second phase, the real-time execution of the obfuscated firewalls in the cloud, is more crucial. We measure our solution's performance in this phase with different number of rules, the result is shown in Fig.\ref{execution}.
From the result we can see that the blocking scheme is much more efficient. Actually, we find that the running time to check a packet with them is separately 300, 100, 6 milliseconds. Considering that our experiments are performed on a ordinary PC and no modern firewall process acceleration techniques (such as FDD) are adopted, our schemes, especially blocking scheme, are practical. And in practice, when adopted in cloud processing, many parallelism and acceleration means can be taken to make our scheme faster.

\section{Conclusion}
\label{sec7}
In this paper, we propose solutions to make it secure for an enterprise to outsource firewalls. The security of our schemes have solid theoretical foundation.
And the experiment results show that our schemes are also practical to use.

There are more other network functionalities, such as IDS, IPS, that can be outsourced, and their privacy issues must be considered too when outsourcing. We think the framework we propose also works for them because many of them  are also based on packet header filtering. But since their configurations are different from firewalls, so the specific solutions to them will also vary. We leave this as our future work.

\bibliographystyle{IEEEtran}
\bibliography{IEEEabrv,Mo}

\end{document}